\documentclass[conference, 10pt]{IEEEtran}
\ifCLASSINFOpdf
  % \usepackage[pdftex]{graphicx}
  % declare the path(s) where your graphic files are
  % \graphicspath{{../pdf/}{../jpeg/}}
  % and their extensions so you won't have to specify these with
  % every instance of \includegraphics
  % \DeclareGraphicsExtensions{.pdf,.jpeg,.png}
\else
  % or other class option (dvipsone, dvipdf, if not using dvips). graphicx
  % will default to the driver specified in the system graphics.cfg if no
  % driver is specified.
  % \usepackage[dvips]{graphicx}
  % declare the path(s) where your graphic files are
  % \graphicspath{{../eps/}}
  % and their extensions so you won't have to specify these with
  % every instance of \includegraphics
  % \DeclareGraphicsExtensions{.eps}
\fi
% graphicx was written by David Carlisle and Sebastian Rahtz. It is
% required if you want graphics, photos, etc. graphicx.sty is already
% installed on most LaTeX systems. The latest version and documentation can
% be obtained at: 
% http://www.ctan.org/tex-archive/macros/latex/required/graphics/
% Another good source of documentation is "Using Imported Graphics in
% LaTeX2e" by Keith Reckdahl which can be found as epslatex.ps or
% epslatex.pdf at: http://www.ctan.org/tex-archive/info/
%
% latex, and pdflatex in dvi mode, support graphics in encapsulated
% postscript (.eps) format. pdflatex in pdf mode supports graphics
% in .pdf, .jpeg, .png and .mps (metapost) formats. Users should ensure
% that all non-photo figures use a vector format (.eps, .pdf, .mps) and
% not a bitmapped formats (.jpeg, .png). IEEE frowns on bitmapped formats
% which can result in "jaggedy"/blurry rendering of lines and letters as
% well as large increases in file sizes.
%
% You can find documentation about the pdfTeX application at:
% http://www.tug.org/applications/pdftex

% *** MATH PACKAGES ***
%
\usepackage[cmex10]{amsmath}
\usepackage{amsfonts}
\usepackage{amssymb}
\usepackage{bm}
\usepackage{dsfont}
\usepackage{xcolor}

\newtheorem{theorem}{Theorem}

\newtheorem{remark}{Remark} 

\newcommand{\bra}[1]{\langle #1|}
\newcommand{\ket}[1]{|#1\rangle}
\newcommand{\braket}[2]{\langle #1|#2\rangle}
\DeclareMathOperator{\Tr}{Tr}
\DeclareMathOperator*{\argmin}{arg\,min}

\newcommand{\Espcc}{E_{\text{sp}}^{\text{cc}}}

\newcommand{\Pem}{\mathsf{P}_{\text{e}|m}}
\newcommand{\Pemax}{\mathsf{P}_{\text{e,max}}}

% correct bad hyphenation here

\hyphenation{op-tical net-works semi-conduc-tor}

\thinmuskip=2mu plus 1mu minus 1mu
\medmuskip=2mu plus 2mu minus 1mu
\thickmuskip=2mu plus 2mu minus 2mu

%\thinmuskip=0mu
%\medmuskip=0mu
%\thickmuskip=0mu

\begin{document}
%
% paper title
% can use linebreaks \\ within to get better formatting as desired
\title{Constant Compositions in the  Sphere Packing Bound for Classical-Quantum Channels}

% author names and affiliations
% use a multiple column layout for up to three different
% affiliations
\author{\IEEEauthorblockN{Marco Dalai}
\IEEEauthorblockA{Department of Information Engineering\\
University of Brescia, Italy\\
Email: marco.dalai@unibs.it}
\and
\IEEEauthorblockN{Andreas Winter}
\IEEEauthorblockA{ICREA \& F\'isica Te\`orica: Informaci\'o i Fenomens Qu\`antics\\
Universitat Aut\`{o}noma de Barcelona, Spain\\
Email: andreas.winter@uab.cat}
}

% conference papers do not typically use \thanks and this command
% is locked out in conference mode. If really needed, such as for
% the acknowledgment of grants, issue a \IEEEoverridecommandlockouts
% after \documentclass

% for over three affiliations, or if they all won't fit within the width
% of the page, use this alternative format:
% 
%\author{\IEEEauthorblockN{Michael Shell\IEEEauthorrefmark{1},
%Homer Simpson\IEEEauthorrefmark{2},
%James Kirk\IEEEauthorrefmark{3}, 
%Montgomery Scott\IEEEauthorrefmark{3} and
%Eldon Tyrell\IEEEauthorrefmark{4}}
%\IEEEauthorblockA{\IEEEauthorrefmark{1}School of Electrical and Computer Engineering\\
%Georgia Institute of Technology,
%Atlanta, Georgia 30332--0250\\ Email: see http://www.michaelshell.org/contact.html}
%\IEEEauthorblockA{\IEEEauthorrefmark{2}Twentieth Century Fox, Springfield, USA\\
%Email: homer@thesimpsons.com}
%\IEEEauthorblockA{\IEEEauthorrefmark{3}Starfleet Academy, San Francisco, California 96678-2391\\
%Telephone: (800) 555--1212, Fax: (888) 555--1212}
%\IEEEauthorblockA{\IEEEauthorrefmark{4}Tyrell Inc., 123 Replicant Street, Los Angeles, California 90210--4321}}

% use for special paper notices
%\IEEEspecialpapernotice{(Invited Paper)}

% make the title area
\maketitle

\begin{abstract}

The sphere packing bound, in the form given by Shannon, Gallager and Berlekamp, was recently extended to classical-quantum channels, and it was shown that this creates a natural setting for combining probabilistic approaches with some combinatorial ones such as the Lov\'asz theta function.
In this paper, we extend the study to the case of constant composition codes. We first extend the sphere packing bound for classical-quantum channels to this case, and we then show that the obtained result is related to a variation of the Lov\'asz theta function studied by Marton. We then propose a further extension to the case of varying channels and codewords with a constant \emph{conditional} composition given a particular sequence. This extension is then applied to auxiliary channels to deduce a bound which can be interpreted as an extension of the Elias bound.
\end{abstract}
% IEEEtran.cls defaults to using nonbold math in the Abstract.
% This preserves the distinction between vectors and scalars. However,
% if the conference you are submitting to favors bold math in the abstract,
% then you can use LaTeX's standard command \boldmath at the very start
% of the abstract to achieve this. Many IEEE journals/conferences frown on
% math in the abstract anyway.

% no keywords

% For peer review papers, you can put extra information on the cover
% page as needed:
% \ifCLASSOPTIONpeerreview
% \begin{center} \bfseries EDICS Category: 3-BBND \end{center}
% \fi
%
% For peerreview papers, this IEEEtran command inserts a page break and
% creates the second title. It will be ignored for other modes.
\IEEEpeerreviewmaketitle

\section{Introduction}
%A number of results in the theory of classical communication through quantum channels have been obtained in the past years that parallel many of the results obtained in the period 1948-1965 for classical channels. ...[BRIEF DISCUSSION?]...  see \cite{holevo-1998}.

%\setlength{\belowdisplayskip}{4pt}
%\setlength{\belowdisplayshortskip}{0pt}
%\newdimen\memlength
%\setlength{\memlength}{\abovedisplayskip}
%\the\abovedisplayskip

%\the\memlength

%\setlength{\abovedisplayskip}{3pt}
%\setlength{\abovedisplayshortskip}{0pt}

%\the\abovedisplayskip

%\setlength{\abovedisplayskip}{\memlength}

%\the\abovedisplayskip

%\begingroup
%\setlength{\abovedisplayskip}{5pt}
%\setlength{\belowdisplayskip}{5pt}

The sphere packing  bound has been recently extended to classical-quantum channels \cite{dalai-ISIT-2012}, \cite[Sec. V]{dalai-TIT-2013} by resorting to the first rigorous proof given for the case of classical discrete memoryless channels (DMC) by Shannon, Gallager and Berlekamp \cite{shannon-gallager-berlekamp-1967-1}.
That resulted in an upper bound to the reliability function of classical-quantum channels, which is the error exponent achievable by means of optimal codes.

The classical proof given in \cite{shannon-gallager-berlekamp-1967-1} can be considered a rigorous completion of Fano's first efforts toward proving the bound \cite[Ch. 9]{fano-book}. However, while Fano's approach led to a tight exponent at high rates for general constant composition codes, the proof in \cite{shannon-gallager-berlekamp-1967-1} only considers the case of the optimal composition.
Shortly afterwards, Haroutunian \cite{haroutunian-1968}, \cite{csiszar-korner-book}, proposed a simple yet rigorous proof which gives the tight exponent for codes with general (possibly non optimal) constant composition. However, a greedy extension of this proof to classical-quantum channels does not give a good bound (see \cite[Th. II.20 and page 35]{winter-phd-1999}). This motivated the choice made in \cite{dalai-ISIT-2012, dalai-TIT-2013} to follow the approach of \cite{shannon-gallager-berlekamp-1967-1}.

In this paper, we modify slightly the approach in \cite{dalai-ISIT-2012, dalai-TIT-2013}  to derive a sphere packing bound for classical-quantum channels with constant composition codes. The main difference with respect to the classical case is in the resulting possible analytical expressions of the bound, which does not seem to be expressible, in this case, in terms of the Kullback-Leibler divegence and mutual information.
In analogy with the results obtained in \cite{dalai-ISIT-2013a} \cite[Sec. VI]{dalai-TIT-2013}, we then discuss the connections of the constant composition version of the bound with a quantity  introduced by Marton \cite{marton-1993} as a generalization of the Lov\'asz theta function  for bounding the zero-error capacity.
Finally, we propose an extension of the sphere packing bound for varying channels and codewords with a constant \emph{conditional} composition from a given sequence, and we show that this result includes as a special case a recently developed generalization of the Elias bound \cite{dalai-ISIT-2014a}.

\section{Definitions}
Consider a classical-quantum channel $\mathcal{C}$ with input alphabet $ \mathcal{X}=\{1,\ldots,| \mathcal{X}|\}$ and associated density operators $S_ x$, $ x\in \mathcal{X}$, in a finite dimensional Hilbert space $\mathcal{H}$. The $n$-fold product channel acts in the tensor product space $\bm{\mathcal{H}}=\mathcal{H}^{\otimes n}$ of $n$ copies of $\mathcal{H}$. To a sequence $\bm{ x}=( x_1, x_2,\ldots, x_n)$ is associated the signal state $\bm{S}_{\bm{ x}}=S_{ x_1}\otimes S_{ x_2}\cdots\otimes S_{ x_n}$.
A block code with $M$ codewords is a mapping from a set of $M$ messages $\{1,\ldots,M\}$ into a set of $M$ codewords  $\bm{ x}_1,\ldots, \bm{ x}_M$, as in the classical case. 
The rate of the code is again $R=(\log M)/n$.

We consider a quantum decision scheme for such a code (POVM) composed of a  collection of $M$ positive operators $\{\Pi_1,\Pi_2,\ldots,\Pi_M\}$ such that $\sum \Pi_m \leq \mathds{1}$, where $\mathds{1}$ is the identity operator. 
The probability that message $m'$ is decoded when message $m$ is transmitted is $\mathsf{P}_{m'|m}=\Tr \Pi_{m'} \bm{S}_{\bm{ x}_m}$ and the probability of error after sending message $m$ is
\begin{equation*}
\Pem=1-\Tr\left(\Pi_m \bm{S}_{\bm{ x}_m}\right).
\end{equation*}
The maximum error probability of the code is defined as the largest $\Pem$, that is,
\begin{equation*}
\Pemax=\max_{m}\Pem.
\end{equation*}
In this paper, we are interested in bounding the probability of error for constant composition codes. Given a composition $P_n$, we define  $\Pemax^{(n)}(R,P_n)$ to be the smallest maximum error probability among all codes of length $n$, rate \emph{at least} $R$, and composition $P_n$.
For a probability distribution $P$, we define the asymptotic optimal error exponent with composition $P$ as
\begin{equation}
E(R,P)=\limsup_{n\to\infty} -\frac{1}{n}\log \Pemax^{(n)}(R,P_n),
\label{eq:E(R)_def_class}
\end{equation} 
where the limsup is over all sequences of codes with rates at least $R$ and compositions $P_n$ tending to $P$ as $n\to\infty$. For channels with a zero-error capacity, the function $E(R,P)$ can be infinite for rates $R$ smaller than some given rate $C_0(P)$, which we can call the zero-error capacity of the channel relative to $P$. It is important to observe that, as for $C_0$, the value $C_0(P)$ only depends on the confusability graph $G$ of the channel, for which we could also call it $C(G,P)$ \cite{csiszar-korner-1981}, \cite{marton-1993}. 

To avoid unnecessary complications, we use a flexible notation in this paper. We keep it simple as far as possible, progressively increasing its complexity by adding arguments to functions as their definitions become more general. The meaning of all quantities will be clear from the context.

\section{Sphere Packing Bound for Constant Composition Codes}

\begin{theorem}
\label{th:cc-sphere-packing}
For all positive rates $R$, distribution $P$, and positive $\varepsilon < R$, we have the bound
\begin{equation*}
E(R,P)\leq \Espcc(R-\varepsilon, P),
\end{equation*}
where $\Espcc(R,P)$ is defined by the relations
\begin{align*}
\Espcc(R,P) & =  \sup_{\rho \geq 0} \left[ E_0^{\text{cc}}(\rho,P) - \rho R\right]%\label{eq:Espcc}
,\\
E_0^{\text{cc}}(\rho,P) & = \min_{F} \left[-(1+\rho)\sum_x P(x) \log\Tr(S_x^{\frac{1}{1+\rho}}F^{\frac{\rho}{1+\rho}})\right].
%\label{eq:E0rhoq}
\end{align*}
the minimum being over all density operators $F$.
\end{theorem}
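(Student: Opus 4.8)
\emph{Sketch of proof.} The plan is to adapt the proof of the sphere packing bound for classical-quantum channels of \cite[Sec.~V]{dalai-TIT-2013}, with one modification tailored to fixed compositions: in place of a general product dummy state $F_1\otimes\cdots\otimes F_n$ we use the tensor power $\bm{F}=F^{\otimes n}$ of a \emph{single} density operator $F$. For any codeword $\bm{x}$ of composition $P_n$ one then has $\log\Tr(\bm{S}_{\bm{x}}^{1-s}\bm{F}^{s})=n\sum_x P_n(x)\log\Tr(S_x^{1-s}F^{s})$, irrespective of the arrangement of the symbols of $\bm{x}$; hence this single $\bm{F}$ is simultaneously adapted to \emph{all} codewords of composition $P_n$, and the relevant hypothesis-testing exponents factorize over the $n$ channel uses. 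Optimizing over $F$ only at the very end produces the minimization over density operators appearing in $E_0^{\text{cc}}$. This is precisely the step where, classically, one would instead substitute the optimal output distribution in closed form; that option is not available here, which is why the bound is not expressible through Kullback--Leibler divergence and mutual information.

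Concretely, fix $\varepsilon\in(0,R)$. If $\Espcc(R-\varepsilon,P)=+\infty$ there is nothing to prove, so assume it is finite; let $\rho^\star\ge 0$ attain $\sup_{\rho\ge 0}[E_0^{\text{cc}}(\rho,P)-\rho(R-\varepsilon)]$ and $F^\star$ attain the minimum in $E_0^{\text{cc}}(\rho^\star,P)$. Replacing $F^\star$ by $(1-\delta)F^\star+\delta\,\mathds{1}/\dim\mathcal{H}$ we may assume $F^\star$ has full rank, at the price of an additive $o_\delta(1)$ term in the final bound that is removed by letting $\delta\to 0$. Set $s^\star=\rho^\star/(1+\rho^\star)\in(0,1)$ and $\mu_P(s)=\sum_x P(x)\log\Tr\!\big(S_x^{1-s}(F^\star)^{s}\big)$, so that $E_0^{\text{cc}}(\rho,P)$ evaluated at $F^\star$ equals $-(1+\rho)\,\mu_P\!\big(\rho/(1+\rho)\big)$. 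Take any code of length $n$, rate at least $R$, composition $P_n$, with maximal error probability $\Pemax$. Since $\sum_m\Pi_m\le\mathds{1}$ and $\Tr\bm{F}=1$, there is a message $m^\star$ with $\Tr(\Pi_{m^\star}\bm{F})\le 1/M\le e^{-nR}$, and by maximality of the error probability $\Tr\!\big((\mathds{1}-\Pi_{m^\star})\bm{S}_{\bm{x}_{m^\star}}\big)\le\Pemax$.

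Apply the classical-quantum Shannon--Gallager--Berlekamp bound on binary hypothesis testing (obtained, as in \cite[Sec.~V]{dalai-TIT-2013}, by reducing the quantum test to a classical binary test on the Nussbaum--Szko\l{}a distributions of the two states and invoking the classical lemma of \cite{shannon-gallager-berlekamp-1967-1}) to the test $\{\Pi_{m^\star},\mathds{1}-\Pi_{m^\star}\}$ between $\bm{S}_{\bm{x}_{m^\star}}$ and $\bm{F}=(F^\star)^{\otimes n}$ at the parameter $s^\star$. Writing $\mu(s)=\log\Tr(\bm{S}_{\bm{x}_{m^\star}}^{1-s}\bm{F}^{s})$, the product structure of $\bm{F}$ and the composition of $\bm{x}_{m^\star}$ give $\mu(s)=n\,\mu_{P_n}(s)$, and likewise for $\mu',\mu''$, with $\mu_{P_n}''$ bounded near $s^\star$ by a channel constant (full rank of $F^\star$ supplies the needed $C^2$ regularity, uniformly in $P_n$). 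The lemma forces at least one of
\[
\Pemax\ \ge\ \tfrac14\exp\!\Big(n\big[\mu_{P_n}(s^\star)-s^\star\mu_{P_n}'(s^\star)\big]-O(\sqrt n)\Big),
\]
\[
e^{-nR}\ \ge\ \tfrac14\exp\!\Big(n\big[\mu_{P_n}(s^\star)+(1-s^\star)\mu_{P_n}'(s^\star)\big]-O(\sqrt n)\Big).
\]
Since $\mu_{P_n},\mu_{P_n}'$ depend affinely, hence continuously, on $P_n$, the bracket in the second line tends to $\mu_P(s^\star)+(1-s^\star)\mu_P'(s^\star)$, which (up to the $o_\delta(1)$ perturbation) equals $-(R-\varepsilon)$; this is the parametric identity following from the stationarity condition $\frac{d}{d\rho}[E_0^{\text{cc}}(\rho,P)-\rho(R-\varepsilon)]|_{\rho^\star}=0$ via the envelope theorem and the substitution $s=\rho/(1+\rho)$. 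Hence for all large $n$ the second alternative would read $e^{-nR}\ge\tfrac14\exp(-n(R-\varepsilon)+o(n)-O(\sqrt n))$, which fails once $n\varepsilon$ dominates the sub-exponential corrections; so the first alternative holds. Using the companion identity $\mu_P(s^\star)-s^\star\mu_P'(s^\star)=-[E_0^{\text{cc}}(\rho^\star,P)-\rho^\star(R-\varepsilon)]=-\Espcc(R-\varepsilon,P)$ (again up to $o_\delta(1)$), the first alternative gives $-\tfrac1n\log\Pemax\le-[\mu_{P_n}(s^\star)-s^\star\mu_{P_n}'(s^\star)]+O(1/\sqrt n)$, which tends to $\Espcc(R-\varepsilon,P)+o_\delta(1)$ as $n\to\infty$. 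Taking the $\limsup$ over code sequences in the sense of \eqref{eq:E(R)_def_class} and then $\delta\to 0$ yields $E(R,P)\le\Espcc(R-\varepsilon,P)$.

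The genuinely delicate points are those already present in the unconstrained classical-quantum proof: establishing $C^2$-regularity and controlling $\mu_P''$ near $s^\star$ when the $S_x$ or $F^\star$ are rank-deficient — handled here by the full-rank perturbation of $F^\star$ and by keeping $s^\star$ strictly inside $(0,1)$ — and the boundary behaviour of the outer maximization over $\rho$ (the cases $\rho^\star=0$, $\rho^\star\to\infty$, or $E_0^{\text{cc}}(\cdot,P)$ non-differentiable at $\rho^\star$), which are dealt with by the usual straight-line/limiting arguments of \cite{shannon-gallager-berlekamp-1967-1},\cite{dalai-TIT-2013}. The remaining steps — the affine dependence on $P_n$, and the absorption of all sub-exponential factors into the slack $\varepsilon$ — are routine.
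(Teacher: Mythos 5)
Your proof follows essentially the same approach as the paper: both adapt the Shannon--Gallager--Berlekamp binary hypothesis test (reduced via Nussbaum--Szko\l{}a) against a tensor-power dummy state $\bm{F}=F^{\otimes n}$, with the crucial constant-composition step being the choice of $F$ as the minimizer of $-\sum_x P(x)\log\Tr(S_x^{1-s}F^{s})$, so that $\mu(s)=-n(1-s)E_0^{\text{cc}}(s/(1-s),P)$ holds exactly rather than only as a per-symbol bound. One correction to your framing: \cite[Th.~5]{dalai-TIT-2013} already uses the tensor power $F^{\otimes n}$ (not a general product $F_1\otimes\cdots\otimes F_n$); what changes here is that $F_s$ is adapted to the composition $P$ (minimizing the $P$-weighted average) instead of being chosen minimax-optimal over $x$. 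Your reorganization of fixing $\rho^\star$, $s^\star$, $F^\star$ at the outset and invoking stationarity/envelope identities is a valid shortcut when the supremum is attained at an interior, differentiable point; you correctly flag $\rho^\star=0$, $\rho^\star\to\infty$, and non-differentiability as the residual delicacies, which is precisely why the paper retains the $s$-parametrized SGB structure (choose $F_s$ for each $s$, then pick $s$ as a function of $R$ with the attendant case analysis).
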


\begin{remark}
The bound is written here in terms of R\'enyi divergences. For commuting states, that is, classical channels, the bound coincides with the classical one often written in terms of Kullback-Leibler  divergences and mutual information (see \cite[Ch. 5, Prob. 23]{csiszar-korner-book}). That other form is particularly pleasant since it has a very intuitive interpretation. In the case of non-commuting states, however, that interpretation and the associated analytical expression give a bound which is in general weaker than the one given above
 (see \cite[Th. II.20 and page 35]{winter-phd-1999}). It is still interesting to question whether another interpretation could be given in this non-commuting case.
\end{remark}

\begin{proof}
The structure of the proof is the same as in \cite{shannon-gallager-berlekamp-1967-1}, and \cite[Th. 5]{dalai-TIT-2013} and only some technical details must be changed. Due to space limitations, we cannot include here a complete self-contained version; we use the notation adopted in \cite[Th. 5]{dalai-TIT-2013}, recall the basic structure of the proof and  point out what the required changes are.

The idea is again to consider a binary hypothesis test between a properly selected code signal $\bm{S}_{\bm{x}_m}$ and an auxiliary density operator $\bm{F}=F^{\otimes n}$. We can then relate $\Pemax$ and $R$ to the two probabilities of error of the considered test which, using the Chernoff bound, can be bounded (see \cite[Th. 5]{dalai-TIT-2013}) in terms the function $\mu(s)$ defined as
\begin{align}
\mu(s) & =  \log \Tr \bm{S}_{\bm{{x}}_m}^{1-s}\bm{F}^s\\
%&=  \log \prod_{i=1}^n \Tr S_{{x}_i}^{1-s} F^s\\
 %& =  n \sum_{x} P({x})\log \Tr S_{{x}}^{1-s} F^s\\
 & = n \sum_{x} P({x})\mu_{S_x, F}(s) ,
\end{align}
where $\mu_{S_x, F}(s)= \log \Tr S_{{x}}^{1-s} F^s$. 
%we arrive at the result that either
%\begin{equation}
%\Pemax>\frac{1}{8}\exp\left[\mu(s)-s\mu'(s)-s\sqrt{2\mu''(s)}\right]
%\label{eq:cond1}
%\end{equation}
%or
%\begin{equation}
%R<-\frac{1}{n}\left[ \mu(s)+(1-s)\mu'(s)-(1-s)\sqrt{2\mu''(s)} - \log 8\right].
%\label{eq:cond2}
%\end{equation}

%These two equations can then be studied as in \cite[Th. 5]{dalai-TIT-2013}, by choosing an 
An optimal choice of $F$ is then considered for a given $s$, and an optimal $s$ is finally picked depending on the rate $R$. The main difference with respect to \cite[Th. 5]{dalai-TIT-2013} is that here, for a given $s$, the operator $F$ is chosen to be the operator $F_s$ defined as
\begin{equation}
F_s = \argmin_F - \sum_x P(x) \log(\Tr S_x^{1-s}F^s).
\end{equation}
This implies that, instead of bounding $\mu(s)$ using a bound on $\mu_{S_x,F}(s)$ as done in \cite[eqs. (51)-(53)]{dalai-TIT-2013}, we can directly write 
\begin{equation}
\mu(s)=-n(1-s)E_0^{\text{cc}}\left(\frac{s}{1-s},P\right).
\end{equation}
We then proceed essentially as in \cite[Th. 5]{dalai-TIT-2013} by considering sequences of codes of increasing block-length. The main difference here is that the compositions $P_{n_1},P_{n_2},\ldots,P_{n_k},\ldots$ of the codes in the sequence are already known to converge to the given distribution $P$, and the probabilities of error $\Pemax^{(n_1)}, \Pemax^{(n_2)},\ldots,\Pemax^{(n_k)},\ldots$ are such that
\begin{equation*}
E(R,P)=\lim_{k\to\infty} -\frac{1}{n_k}\log \Pemax^{(n_k)}.
\end{equation*}
Then again we proceed as in \cite[Th. 5]{dalai-TIT-2013}, with the only difference that we have now to check the continuity of our new $F_s$ is $s$ in the interval $0<s<1$, which is however even simpler than in that case. Apart from small obvious details, the next difference is in ``case 2)'' of the proof in \cite[Th. 5]{dalai-TIT-2013} where we conclude that 
\begin{align*}
R & \leq  \sum_{x} P({x}) \left(- \frac{1}{s}\mu_{S_{x},F_s}(s)\right).
\end{align*}
In this case we do not need to bound each single term of the sum as done in \cite[eq. (53)]{dalai-TIT-2013};	in this new setting, the right hand side of the above equation is  by definition exactly
\begin{equation}
 \frac{1-s}{s}E_0^{\text{cc}}\left(\frac{s}{1-s}\right)
\end{equation} 
and, from this point onward, we proceed essentially as in \cite[Th.~5]{dalai-TIT-2013}.
\end{proof}

Now it is not difficult to show that after optimization of the composition we recover the original bound of \cite{dalai-ISIT-2012}, \cite{dalai-TIT-2013}. In order to do this, note that 
\begin{align*}
\max_P \Espcc(R) & =  \sup_{\rho \geq 0} \left[ \max_P E_0^{\text{cc}}(\rho,P) - \rho R\right].
\end{align*}
Then, 
\begin{align*}
\max_P  & E_0^{\text{cc}}(\rho,P) \\
&  = \max_P \min_{F} \left[-(1+\rho)\sum_x P(x) \log\Tr(S_x^{\frac{1}{1+\rho}}F^{\frac{\rho}{1+\rho}})\right].\\
& =  \min_{F} \max_P  \left[-(1+\rho)\sum_x P(x) \log\Tr(S_x^{\frac{1}{1+\rho}}F^{\frac{\rho}{1+\rho}})\right]\\
& =  \min_{F} \left[-(1+\rho)\max_x \log\Tr(S_x^{\frac{1}{1+\rho}}F^{\frac{\rho}{1+\rho}})\right],
\end{align*}
where the minimum and the maximum can be exchanged due to linearity in $P$ and convexity in $F$. The resulting expression is in fact the coefficient $E_0(\rho)$ which defines the sphere packing bound as proved in \cite[Th.~6]{dalai-TIT-2013}.

\section{Connections with Marton's function}
The bound $\Espcc(R,P)$ obtained in the previous section can be used as an upper bound for the zero-error capacity of the channel relative to $P$. Whenever the function $\Espcc(R-\varepsilon,P)$ is finite, in fact, then the probability of error at rate $R$ is non-zero. It is not difficult to observe that the smallest rate $R_\infty(P)$ at which $\Espcc(R,P)$ is finite can be evaluated as
\begin{align*}
R_\infty(P) & = \lim_{\rho\to\infty} \frac{E_0^{\text{cc}}(\rho,P)}{\rho}\\
& = \min_{F} \left[-\sum_x P(x) \log\Tr(S_x^0 F)\right],
\end{align*}
where $S_x^0$ is the projection onto the range of $S_x$. When optimized over $P$, we obtain the expression
\begin{align*}
R_\infty & = \min_{F} \max_x \log\frac{1}{\Tr(S_x^0 F)},
\end{align*}
already discussed in \cite{dalai-TIT-2013}. We then have the bounds $C_0(P)\leq R_\infty(P)$ and $C_0\leq R_\infty$.

It was observed in \cite{dalai-ISIT-2013a} and \cite[Sec. VI]{dalai-TIT-2013} that $R_\infty$ is related to the Lov\'asz number $\vartheta$ \cite{lovasz-1979}.
Here, we observe that, in complete analogy, the value $R_\infty(P)$ is related to a variation of the $\vartheta$ function introduced by Marton in \cite{marton-1993} as an upper bound to $C(G,P)$.
Given a (confusability) graph $G$, Marton proposes the following upper bound\footnote{We use the notation $\vartheta(G,P)$ in place of Marton's $\lambda(G,P)$ to preserve a higher coherence with the context of this paper. For the same reason, in what follows we also use, as in \cite{dalai-TIT-2013},  a logarithmic version of the ordinary Lov\'asz $\vartheta$ function, that is, our $\vartheta$ corresponds to $\log\vartheta$ in Lov\'asz' notation. \label{note:loglovasz}} to $C(G,P)$:
\begin{equation}
\vartheta(G,P)
         = \min_{\{u_x\}, f} \sum_x P(x) \log\frac{1}{|\braket{u_x}{f}|^2},
\label{eq:martonstheta}
\end{equation}
where the minimum is over all representations $\{u_x\}$ of the graph $G$ in the Lov\'asz sense and over all unit norm vectors $f$ (in some Hilbert space). Let us now compare this quantity with $R_\infty(P)$. We enforce the notation writing $R_\infty(\{S_x\}, P)$  to point out the dependence of $R_\infty(P)$ on the channel states $S_x$. Now, for a given confusability graph $G$, the best upper bound to $C(G,P)$ is obtained by minimizing $R_\infty(\{S_x\}, P)$ over all possible  channels with confusability graph $G$. We may then define
\begin{align}
\vartheta_{\text{sp}}(G,P)
     & = \inf_{\{S_x\}} R_\infty(\{S_x\}, P)\\
     & = \inf_{\{U_x\}, F} \sum_x P(x) \log\frac{1}{\Tr(U_x F)},
\label{eq:thetasp}
\end{align}
where $\{U_x\}$ now runs over all sets of projectors with confusability graph $G$, and deduce the bound $C(G,P)\leq \vartheta_{\text{sp}}(G,P)$.

The quantity $\vartheta_{\text{sp}}(G,P)$ is the constant composition analog of the formal quantity $\vartheta_{\text{sp}}(G)$ defined in \cite[Sec.~VI]{dalai-TIT-2013}. 
In that case it was observed by Schrijver that in fact $\vartheta_{\text{sp}}(G)=\vartheta(G)$ (with our logarithmic definition of $\vartheta$, see footnote \ref{note:loglovasz}). We have the analogous result for a constant composition.
\begin{theorem}
  For any graph $G$ and composition $P$, $\vartheta_{\text{sp}}(G,P)=\vartheta(G,P)$.
  \label{th:thetaspeqtheta}
\end{theorem}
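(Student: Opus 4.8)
The plan is to establish the two inequalities separately. The inequality $\vartheta_{\text{sp}}(G,P)\le\vartheta(G,P)$ is the easy one: given any representation $\{u_x\}$ of $G$ in the Lov\'asz sense and any unit vector $f$, the rank-one projectors $U_x=\ket{u_x}\bra{u_x}$ together with the state $F=\ket{f}\bra{f}$ give $\Tr(U_xF)=|\braket{u_x}{f}|^2$, so the objective of (\ref{eq:thetasp}) is bounded by that of (\ref{eq:martonstheta}). The only nuisance is that $\{\ket{u_x}\bra{u_x}\}$ may realise a proper subgraph of $G$; this is fixed at no cost by passing to $\mathcal{H}\oplus\mathbb{C}^{|E(G)|}$, replacing $U_x$ by $\ket{u_x}\bra{u_x}$ plus the projector onto the coordinates indexed by the edges incident to $x$, and $F$ by $F\oplus 0$ — this reinstates exactly the edges of $G$ without changing any $\Tr(U_xF)$.

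For the reverse inequality $\vartheta(G,P)\le\vartheta_{\text{sp}}(G,P)$ I would show that every feasible pair $(\{U_x\},F)$ for $\vartheta_{\text{sp}}(G,P)$ can be converted into a feasible pair $(\{u_x\},f)$ for $\vartheta(G,P)$ with the same objective value, the key being that one must not keep $F$ fixed but purify it. Let $\mathcal{K}$ be an auxiliary Hilbert space and $\ket{\Phi}\in\mathcal{H}\otimes\mathcal{K}$ a purification of $F$, so that $\bra{\Phi}(A\otimes\mathds{1})\ket{\Phi}=\Tr(AF)$ for every operator $A$ on $\mathcal{H}$; set $p_x=\Tr(U_xF)$, $\ket{u_x}=p_x^{-1/2}(U_x\otimes\mathds{1})\ket{\Phi}$, and $\ket{f}=\ket{\Phi}$ (assuming all $p_x>0$; the vertices with $P(x)=0$ or $p_x=0$ are handled routinely). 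Since $U_x\otimes\mathds{1}$ is a projector fixing $\ket{u_x}$, one gets $|\braket{u_x}{f}|^2=p_x=\Tr(U_xF)$; and since $U_xU_y=0$ for every non-edge $\{x,y\}$, one gets $\braket{u_x}{u_y}\propto\bra{\Phi}(U_xU_y\otimes\mathds{1})\ket{\Phi}=0$, so $\{u_x\}$ is a bona fide representation of $G$. Hence $\vartheta(G,P)\le\sum_xP(x)\log\frac{1}{\Tr(U_xF)}$, and taking the infimum over $(\{U_x\},F)$ finishes this direction.

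I do not expect a genuine obstacle; the lemma is in essence the observation that allowing higher-rank projectors in (\ref{eq:thetasp}) buys nothing once one is also permitted to enlarge the ambient Hilbert space. The one step that is not purely mechanical is the purification in the second inequality: a naive reduction to rank one — e.g. replacing $U_x$ by the top eigenprojector of $U_xFU_x$ — would lose a multiplicative factor $\operatorname{rank}(U_x)$, whereas passing to a purification of $F$ trades this rank for extra dimensions, which the definition of $\vartheta(G,P)$ grants for free. Beyond that, only the cosmetic graph-matching in the first inequality and the degenerate vertices ($P(x)=0$ or $\Tr(U_xF)=0$) call for a line of care.
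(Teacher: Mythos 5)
Your proof is correct and follows essentially the same route as the paper: the forward inequality by restricting to rank-one projectors and a rank-one handle, and the converse by purifying $F$ and taking the normalized vectors $(U_x\otimes\mathds{1})\ket{\Phi}/\|(U_x\otimes\mathds{1})\ket{\Phi}\|$ as the orthonormal representation. The extra care you take about matching the confusability graph exactly and about degenerate vertices is not spelled out in the paper's proof, but it does not change the substance of the argument.
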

\begin{proof}
It is obvious that $\vartheta_{\text{sp}}(G,P)\leq\vartheta(G,P)$, since the right hand side of  \eqref{eq:martonstheta} is obtained by restricting the operators in the right hand side of \eqref{eq:thetasp} to have rank one.

We now prove the converse inequality (cf.~\cite{DuanWinter}). 
Let $\{U_x\}$ and $F$ be a representation of $G$ and a state. 
Let first $\ket{\psi}\in \mathcal{H}\otimes \mathcal{H}'$ be a purification of $F$ obtained using an auxiliary space $\mathcal{H}'$, so that $\Tr(U_x F)=\Tr(U_x\otimes \mathds{1}_{\mathcal{H}'}\ket{\psi}\bra{\psi})$. Let then 
\begin{equation}
\ket{w_x}=\frac{U_x\otimes \mathds{1}_{\mathcal{H}'}\ket{\psi}}{\|U_x\otimes \mathds{1}_{\mathcal{H}'}\ket{\psi}\|}.
\end{equation}
It is not difficult to check that $\{w_x\}$ is an orthonormal representation of $G$ and that 
$\Tr(U_x F) = \Tr(U_x\otimes \mathds{1}_{\mathcal{H}'}\ket{\psi}\bra{\psi})
            = |\braket{w_x}{\psi}|^2$, for all $x$.
Hence, the orthormal representation $\{w_x\}$ and the unit norm vector $\psi$ satisfy
\begin{equation}
  \sum_x P(x) \log\frac{1}{\Tr(U_x F)}
         = \sum_x P(x)\log\frac{1}{|\braket{w_x}{\psi}|^2},
\end{equation}
which implies that $\vartheta(G,P) \leq \vartheta_{\text{sp}}(G,P)$.
\end{proof}

We can now discuss another interesting issue about the use of the quantity $\vartheta(G,P)$. When we are interested in bounding $C_0$, we can use the bound $C_0\leq \vartheta(G)$ or we can also use the bound\footnote{Not that $C_0 = \max_P C_0(P)$, since the number of compositions is polynomial in the block-length.}  $C_0\leq \max_P \vartheta(G,P)$.
Marton \cite{marton-1993} stated that this does not make a difference, since $\max_P \vartheta(G,P)=\vartheta(G)$. However, a proof of this statement does not seem to follow easily from the definitions, since we would need to exchange the maximization over $P$ with the minimization over representations and handles.
We use Theorem \ref{th:thetaspeqtheta} to prove this statement.
\begin{theorem}
  For any graph $G$, $\max_P \vartheta(G,P)=\vartheta(G)$.
\end{theorem}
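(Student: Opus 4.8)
The plan is to reduce the assertion to its sphere‑packing counterpart, where enough convexity is available to exchange the two optimisations. By Theorem~\ref{th:thetaspeqtheta} we have $\vartheta(G,P)=\vartheta_{\text{sp}}(G,P)$ for every $P$, and by Schrijver's observation recalled above $\vartheta(G)=\vartheta_{\text{sp}}(G)$; so it is enough to prove $\max_P\vartheta_{\text{sp}}(G,P)=\vartheta_{\text{sp}}(G)$. The reason for passing to these quantities is that the formulation \eqref{eq:thetasp} in terms of a projector system $\{U_x\}$ (with $U_xU_y=0$ for non-adjacent $x,y$) together with an \emph{arbitrary} density operator $F$ is stable under direct sums, which is exactly the convexity the rank‑one formulation \eqref{eq:martonstheta} lacks.

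First I would set up the obvious ``score'' bookkeeping. Call a pair $r=(\{U_x\},F)$ as above a representation, with score vector $s(r)$ of entries $s_x(r)=\log\frac{1}{\Tr(U_xF)}\in[0,\infty]$. A short perturbation argument (enlarging the space by one fresh dimension per vertex of infinite score and mixing $F$ with the corresponding rank‑one states) shows that restricting to representations with all scores finite changes none of the infima below, since $\vartheta_{\text{sp}}(G)=\vartheta(G)<\infty$. With this, $\vartheta_{\text{sp}}(G,P)=\inf_r\langle P,s(r)\rangle$ and $\vartheta_{\text{sp}}(G)=\inf_r\max_x s_x(r)$. The key step is a convexity claim: given representations $r,r'$ and $\alpha\in[0,1]$ there is a representation $r''$ with $s(r'')\le\alpha s(r)+(1-\alpha)s(r')$ entrywise. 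Take $U_x''=U_x\oplus U_x'$ and $F''=\alpha F\oplus(1-\alpha)F'$ on $\mathcal H\oplus\mathcal H'$; then $U_x''U_y''=(U_xU_y)\oplus(U_x'U_y')=0$ for non-adjacent $x,y$, while $\Tr(U_x''F'')=\alpha\Tr(U_xF)+(1-\alpha)\Tr(U_x'F')\ge(\Tr U_xF)^{\alpha}(\Tr U_x'F')^{1-\alpha}$ by concavity of $\log$, which gives the claimed bound on the scores. Hence the set $\mathcal U=\{t\in\mathbb R_{\ge0}^{\mathcal X}:t\ge s(r)\ \text{for some representation } r\}$ is convex and upward closed, and $\vartheta_{\text{sp}}(G,P)=\inf_{t\in\mathcal U}\langle P,t\rangle$, $\vartheta_{\text{sp}}(G)=\inf_{t\in\mathcal U}\max_x t_x=\inf_{t\in\mathcal U}\max_P\langle P,t\rangle$.

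Then the exchange is routine. Since the simplex of distributions is compact and convex, $\mathcal U$ is convex, and $(P,t)\mapsto\langle P,t\rangle$ is bilinear and continuous, Sion's minimax theorem yields
\begin{equation*}
\max_P\vartheta_{\text{sp}}(G,P)=\max_P\inf_{t\in\mathcal U}\langle P,t\rangle=\inf_{t\in\mathcal U}\max_P\langle P,t\rangle=\vartheta_{\text{sp}}(G).
\end{equation*}
(Alternatively, separate $\mathcal U$ from the open convex set $\{t:t_x<\vartheta_{\text{sp}}(G)\ \forall x\}$ by a hyperplane; upward closedness forces the separating functional to be given by a nonnegative vector, which after normalisation is the wanted $P$.) Combining with Theorem~\ref{th:thetaspeqtheta} and $\vartheta(G)=\vartheta_{\text{sp}}(G)$ gives $\sup_P\vartheta(G,P)=\vartheta(G)$; finally, $\vartheta(G,P)$ is a pointwise infimum of functions linear in $P$, hence concave and, being finite-valued, continuous on the simplex, so the supremum is attained and equals $\vartheta(G)$.

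The main obstacle is precisely the convexity needed for the minimax step, and it is the whole point of invoking Theorem~\ref{th:thetaspeqtheta}: averaging two orthonormal representations need not give an orthonormal representation, and normalising a direct sum of unit vectors creates cross terms, so \eqref{eq:martonstheta} resists a direct exchange of $\max_P$ with the minimisation; allowing $F$ to be a general density operator lets one absorb the convex weights into the block-diagonal state $\alpha F\oplus(1-\alpha)F'$. The remaining points — the finite-score reduction and the existence of the optimal $P$ — are straightforward.
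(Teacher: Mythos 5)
Your proposal is correct and mirrors the paper's argument: both reduce the statement to $\vartheta_{\text{sp}}$ via Theorem~\ref{th:thetaspeqtheta} (and Schrijver's identity $\vartheta_{\text{sp}}(G)=\vartheta(G)$), both use the direct-sum construction $U_x^{(1)}\oplus U_x^{(2)}$, $\,\alpha F^{(1)}\oplus(1-\alpha)F^{(2)}$ to obtain the needed convexity, and both conclude with a minimax theorem. The only cosmetic difference is that you work with the convex upper set of log-scores (invoking AM--GM), whereas the paper keeps the convex set of overlap functions $f(x)=\Tr U_xF$ and uses convexity of $\log(1/f)$ inside the payoff $J$; your explicit perturbation to avoid infinite scores is a reasonable extra detail the paper leaves implicit.
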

\begin{proof}
For any representation $\{U_x\}$ of $G$ and density operator $F$, define
the function $f(x) = \Tr U_x F$, and denote the set of all functions $f$
obtained in this way by $\text{OR}(G)$.
The proof of Theorem \ref{th:thetaspeqtheta} shows that any $f\in\text{OR}(G)$
can be realized by rank-one projections $U_x = \ket{u_x}\bra{u_x}$
and a pure state $F = \ket{f}\bra{f}$, in a space of dimension at most
$|\mathcal{X}|$ (namely the span of the $\ket{u_x}$). 
In particular, it follows that $\text{OR}(G)$ is closed and compact.

Furthermore, it is convex: namely, consider $f_i(x) = \Tr U_x^{(i)} F^{(i)}$
for representations $\{U_x^{(i)}\}$ of $G$ and density operators $F^{(i)}$,
$i=1,2$. Then, for $0\leq p \leq 1$, let
$U_x = U_x^{(1)} \oplus U_x^{(2)}$ and 
$F = p F^{(1)} \oplus (1-p) F^{(2)}$, which has associated
$f(x) = \Tr U_x F = p f_1(x) + (1-p) f_2(x)$, i.e.~$p f_1 + (1-p) f_2 \in \text{OR}(G)$.

Now define the quantity
\begin{equation}
  J(f,P)=\sum_x P(x)\log \frac{1}{f(x)},
\end{equation}
for compositions $P$ and functions $f\in\text{OR}(G)$.
The theorem is equivalent to the statement that
\begin{equation}
  \max_P \min_{f\in\text{OR}(G)} J(f,P)= \min_{f\in\text{OR}(G)} \max_P J(f,P),
  \label{eq:minimax}
\end{equation}
since the left hand side equals $\max_P \vartheta(G,P)$ by 
Theorem \ref{th:thetaspeqtheta}, 
and the right hand side equals $\vartheta(G)$ by \cite[Th. 8]{dalai-TIT-2013}.

But \eqref{eq:minimax} is an instance of the minimax theorem. Indeed, both
the domains of $f$ and $P$ are convex and compact, and the functional
$J$ is convex in the former and concave (in fact affine linear) in
the latter.
\end{proof}

We close this section with a simple yet useful result which we will need in the next section. This is the analogous of \cite[Th. 10]{dalai-TIT-2013} for the constant composition setting.
\begin{theorem}
For any pure-state channel we have the inequality $\Espcc(R_\infty(P),P)\leq R_\infty(P)$.
\label{th:Erinfty}
\end{theorem}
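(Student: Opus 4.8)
The plan is to reduce the claimed inequality to a single estimate on $E_0^{\text{cc}}(\rho,P)$ holding for every fixed $\rho$. By definition $\Espcc(R_\infty(P),P)=\sup_{\rho\ge0}\bigl[E_0^{\text{cc}}(\rho,P)-\rho R_\infty(P)\bigr]$, so the assertion is equivalent to
\[
  E_0^{\text{cc}}(\rho,P)\le(1+\rho)\,R_\infty(P)\qquad\text{for all }\rho\ge0.
\]
If $R_\infty(P)=+\infty$ there is nothing to prove, so I would assume it finite and fix a density operator $F^\ast$ attaining the minimum in $R_\infty(P)=\min_F\bigl[-\sum_x P(x)\log\Tr(S_x^0 F)\bigr]$; finiteness then forces $\Tr(S_x^0 F^\ast)>0$ for every $x$ with $P(x)>0$.

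Next I would use that the channel is pure-state: each $S_x=\ket{\psi_x}\bra{\psi_x}$ is a rank-one projector, hence $S_x^{t}=S_x=S_x^0$ for all $t>0$, and in particular $S_x^{1/(1+\rho)}=S_x^0$. Feeding $F^\ast$ (a generally suboptimal feasible point) into the minimization defining $E_0^{\text{cc}}(\rho,P)$ and setting $s=\rho/(1+\rho)\in[0,1)$ gives
\[
  E_0^{\text{cc}}(\rho,P)\le-(1+\rho)\sum_x P(x)\log\Tr\bigl(S_x (F^\ast)^{s}\bigr).
\]
The remaining ingredient is the elementary operator inequality $(F^\ast)^{s}\ge F^\ast$: the eigenvalues of a density operator lie in $[0,1]$, one has $\lambda^{s}\ge\lambda$ there for $s\le1$, and the two operators are simultaneously diagonalizable. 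Since $S_x\ge0$ this yields $\Tr(S_x (F^\ast)^{s})\ge\Tr(S_x F^\ast)=\Tr(S_x^0 F^\ast)>0$ for every $x$ with $P(x)>0$, whence
\[
  E_0^{\text{cc}}(\rho,P)\le-(1+\rho)\sum_x P(x)\log\Tr(S_x^0 F^\ast)=(1+\rho)R_\infty(P),
\]
and taking the supremum over $\rho\ge0$ concludes the proof.

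I do not expect any serious obstacle. The only points needing (minor) care are: the reduction to the per-$\rho$ inequality; the observation that the pure-state hypothesis collapses every fractional power $S_x^{1/(1+\rho)}$ to the range projector $S_x^0$, which is precisely what lets the \emph{same} $F^\ast$ be used simultaneously for $R_\infty(P)$ and for every $E_0^{\text{cc}}(\rho,P)$; and the monotonicity $F^{s}\ge F$ for states. One should also note that the degenerate cases $R_\infty(P)=\infty$ and $P(x)=0$ are handled trivially, as above.
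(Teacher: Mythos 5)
Your proof is correct and follows essentially the same route as the paper: it exploits that $S_x^{1/(1+\rho)}=S_x=S_x^0$ for pure states and that $F^{\rho/(1+\rho)}\geq F$ for a density operator $F$, to conclude $E_0^{\text{cc}}(\rho,P)\leq(1+\rho)R_\infty(P)$ for every $\rho\geq 0$. The only cosmetic difference is that you instantiate a fixed minimizer $F^\ast$ of $R_\infty(P)$ rather than comparing the two minimizations pointwise in $F$, and you spell out the degenerate cases; the substance is identical.
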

\begin{proof}
For a pure state channel, since $S_x^{\frac{1}{1+\rho}}=S_x$, we have
\begin{align*}
E_0^{\text{cc}}(\rho,P) & = \min_{F} \left[-(1+\rho)\sum_x P(x) \log\Tr(S_x F^{\frac{\rho}{1+\rho}})\right]\\
& \leq  \min_{F} \left[-(1+\rho)\sum_x P(x) \log\Tr(S_x F)\right]\\
& = (1+\rho) R_\infty(P),
\end{align*}
from which we easily deduce the statement by definition of $\Espcc(R,P)$.
\end{proof}

\section{A Conditional Sphere Packing Bound}
We now propose an extension of the sphere packing to handle the case of varying channels
with a \emph{conditional composition} constraint on the codewords.
%Although this setting  can be considered artificial, the bound will prove useful when applied to auxiliary channels in a procedure that can be considered as the evolution of the method used in \cite[Sec. VIII]{dalai-TIT-2013} along the same lines used in \cite{dalai-ISIT-2014a}.
Here we assume that we have a finite set $\mathcal{A}$ of possible states and a different channel $\mathcal{C}_a$, for each state $a\in\mathcal{A}$. The communication is governed by a sequence of states $\bm{a}=(a_1,\ldots,a_n)\in\mathcal{A}^n$ (known to both encoder and decoder)  with composition $P$, which determines the channels to use. In particular, channel $\mathcal{C}_{a_i}$ is used at time instant $i$. The composition constraint in this case is that all codewords have conditional composition $V$ given $\bm{a}$, which means that any codeword has a symbol $x$ in a fraction $V(x|a)$ of the $nP(a)$ positions where $a_i=a$.
Note that this general scenario includes the ordinary constant composition situation described before, which is obtained for example when $P(a)=1$ for some $a$ and $\bm{a}=(a,a,\ldots,a)$. For a given $P$ and $V$, let now $E(\{\mathcal{C}_a\},R,P,V)$ be the optimal asymptotic error exponent achievable by codes with asymptotic conditional composition $V$ with respect to a sequence with asymptotic composition $P$ using the set of channels $\{\mathcal{C}_a\}$, $a\in\mathcal{A}$.

We can adapt the proof of the sphere packing bound by choosing the density operator $\bm{F}$ to take into account this state dependent structure of the communication process. In particular, instead of using $n$ identical copies of a single density operators $F$, we can use $|\mathcal{A}|$ different operators $F_a$, $a\in\mathcal{A}$ to build $\bm{F}$ as
\begin{equation}
\bm{F}=F_{a_1}\otimes F_{a_2} \otimes \cdots \otimes F_{a_n}.
\end{equation}
The theorem can be extended to this case without substantially changing the proof, the main difference being in the function $\mu(s)$ which now reads
\begin{equation}
\mu_{\bm{S}_{\bm{{x}}_m},\bm{F}}(s) =  n \sum_{a, x} P({a})V(x|a)\mu_{S_{x},F_a}(s).
\end{equation}
This leads to a bound in the form
\begin{equation}
E(\{\mathcal{C}_a\},R,P,V)\leq \Espcc(\{\mathcal{C}_a\},R-\varepsilon, P,V),
\end{equation}
where $\Espcc(\{\mathcal{C}_a\},R,P,V)$ is defined by
\begin{align}
\Espcc(\{\mathcal{C}_a\},R,P,V) & =  \sup_{\rho \geq 0} \left[ E_0^{\text{cc}}(\{\mathcal{C}_a\},\rho,P,V) - \rho R\right],\\
E_0^{\text{cc}}(\{\mathcal{C}_a\},\rho,P,V) & = \sum_a P(a) E_0^{\text{cc}}(\mathcal{C}_a,\rho,V(\cdot|a)).
\end{align}
and $E_0^{\text{cc}}(\mathcal{C}_a,\rho,V(\cdot|a))$ is the coefficient $E_0^{\text{cc}}$ of the sphere packing bound for channel $\mathcal{C}_a$ with composition $V(\cdot|a)$.

This bound is finite for all rates $R\geq R_\infty(\{{\mathcal{C}}_a\},P,V)$ where 
\begin{equation}
R_\infty(\{{\mathcal{C}}_a\},P,V) = \sum_a P(a) R_\infty({\mathcal{C}}_a,V(\cdot|a)),
\end{equation}
and it is not difficult to show, using the same procedure used in Theorem \ref{th:Erinfty}, that for pure state channels we have the inequality 
\begin{equation}
\Espcc(\{\mathcal{C}_a\},R_\infty(\{{\mathcal{C}}_a\},P,V),P,V)\leq R_\infty(\{{\mathcal{C}}_a\},P,V).
\label{eq:CondEspRinfty}
\end{equation}

We can now combine this bound with the ideas presented in \cite{dalai-ISIT-2013b}, \cite{dalai-TIT-2013} and \cite{blahut-1977}, much in the same way as done in \cite{dalai-ISIT-2014a}, to obtain a bound on the reliability of a channel $\mathcal{C}$ using auxiliary classical-quantum channels $\{\tilde{\mathcal{C}}_a\}$.  We limit here the discussion to the case of a pure-state channel with states $S_x=\ket{\psi_x}\bra{\psi_x}$ and pure-states auxiliary channels  $\{\tilde{\mathcal{C}}_a\}$.
For a $\rho\geq 1$, we define the set $\Gamma(\rho)$ of admissible pure-state auxiliary channels $\tilde{\mathcal{C}}$ with states $\tilde{S}_x=\ket{\tilde{\psi}_{x}}\bra{\tilde{\psi}_{x}}$ such that
\begin{equation}
|\braket{\tilde{\psi}_{x}}{\tilde{\psi}_{x'}}|\leq |\braket{\psi_{x}}{\psi_{x'}}|^{1/\rho} , \quad \forall x,x' \in \mathcal{X}.
\end{equation}
For any $a\in\mathcal{A}$ we choose an auxiliary pure state channel $\tilde{\mathcal{C}}_a\in\Gamma(\rho)$ with states $\tilde{S}_{a,x}=\ket{\tilde{\psi}_{a,x}}\bra{\tilde{\psi}_{a,x}}$. Given a sequence $\bm{a}=(a_1,\ldots,a_n)\in\mathcal{A}^n$ and a sequence $\bm{x}=(x_1\ldots,x_n)\in\mathcal{X}^n$, let 
\begin{equation}
\tilde{\bm{\psi}}_{\bm{a},\bm{x}} = \tilde{\psi}_{a_1,x_1}\otimes\cdots\otimes \tilde{\psi}_{a_n,x_n}.
\end{equation}
Now, given two sequences $\bm{x}=(x_1,\ldots,x_n)$ and $\bm{x}'=(x'_1,\ldots,x'_n)$, we can use these auxiliary channels to bound the overlap $ |\braket{\bm{\psi}_{\bm{x}}}{\bm{\psi}_{\bm{x}'}}|^2$  as
\begin{equation}
 |\braket{\bm{\psi}_{\bm{x}}}{\bm{\psi}_{\bm{x}'}}|^2 \geq |\braket{\tilde{\bm{\psi}}_{\bm{a},\bm{x}}}{\tilde{\bm{\psi}}_{\bm{a},\bm{x}'}}|^{2\rho}.
\end{equation}
This will allow us to bound $E(R,P)$ for the original channel using the bound (see for example \cite[Th. 12]{dalai-TIT-2013})
\begin{align}
E(R,P) & \leq -\frac{1}{n}\log \max_{m\neq m'} |\braket{\bm{\psi}_{\bm{x}_m}}{\bm{\psi}_{\bm{x}_{m'}}}|^2 + o(1)
\label{eq:ineqEinprod}\\
& \leq -\frac{\rho}{n}\log \max_{m\neq m'}|\braket{\tilde{\bm{\psi}}_{\bm{a},\bm{x}_m}}{\tilde{\bm{\psi}}_{\bm{a},\bm{x}_{m'}}}|^2 + o(1).
\label{eq:ineq1}
\end{align}
We could use the extension of the sphere packing bound considered in this section to upper bound the right hand side of the last equation as done in \cite[Sec. VIII]{dalai-TIT-2013} if all codewords $\bm{x}_m$ had the same conditional composition given the sequence $\bm{a}$. Since the sequence $\bm{a}$ is arbitrary, we choose it so that this condition is met by at least a large enough subset $\mathcal{T}$ of codewords, and we only apply the sphere packing bound to this subset $\mathcal{T}$. In order to do this, we adopt an idea proposed by Blahut \cite{blahut-1977} in a generalization of the Elias bound and already considered for a further generalization in \cite{dalai-ISIT-2014a}.

Given a code with $M=e^{nR}$ codewords of composition $P$, assume that there exist conditional compositions $V(x'|x)$ (i.e., $nP(x)V(x'|x)$ is an integer) such that
\begin{equation}
\sum_{x}P(x)V(x'|x)=P(x')
\end{equation}
(that we will write as $PV=P$) and
\begin{equation}
R> I(P,V),
\end{equation}
where $I(P,V)$ is the mutual information with the notation of \cite{csiszar-korner-book}.
Then, (see \cite{blahut-1977}, proof of Th. 8) there is at least one sequence $\bar{\bm{x}}$ of composition $P$ (not necessarily a codeword) such that there is a subset $\mathcal{T}$ of at least $|\mathcal{T}|=e^{n(R- I(P,V)-o(1))}$ codewords with conditional composition $V$ from $\bar{\bm{x}}$. 
We now choose the set $\mathcal{A}=\mathcal{X}$ and the sequence $\bm{a}=\bar{\bm{x}}$, although we keep the original notation $\bm{a}$ to avoid confusion. Furthermore, since we are interested in the limit as $n\to\infty$, we directly work with the asymptotic composition $P$ and matrix $V$, removing the constraint that $nP(x)$ and $nP(x)V(x'|x)$ are integers.

Now, we can use the conditional sphere packing bound introduced here to bound the probability of error of the subcode $\mathcal{T}$ used over the varying channel $\tilde{C}_{a_1},\cdots,\tilde{C}_{a_n}$.  For these codewords used over this varying channel, there is a decision rule such that (\cite{holevo-2000}, \cite[Sec. VIII]{dalai-TIT-2013})
\begin{align}
\tilde{\mathsf{P}}_{\text{e,max}} & \leq (|\mathcal{T}|-1)\max_{m,m'\in \mathcal{T}}|\braket{\tilde{\bm{\psi}}_{\bm{a},\bm{x}_m}}{\tilde{\bm{\psi}}_{\bm{a},\bm{x}_{m'}}}|^2\\
& \leq e^{n(R-I(P,V)+o(1))}\max_{m,m'\in \mathcal{T}}|\braket{\tilde{\bm{\psi}}_{\bm{a},\bm{x}_m}}{\tilde{\bm{\psi}}_{\bm{a},\bm{x}_{m'}}}|^2.
\label{eq:ineq2}
\end{align}
On the other hand, as $n\to\infty$ we have 
\begin{multline}
-\frac{1}{n}\log \tilde{\mathsf{P}}_{\text{e,max}}\leq \Espcc(\{\tilde{\mathcal{C}}_a\},R-I(P,V)-\varepsilon, P,V) \\+ R - I(P,V).
\label{eq:ineq3}
\end{multline}
Putting together equations \eqref{eq:ineq1}, \eqref{eq:ineq2} and \eqref{eq:ineq3}, we obtain
\begin{multline}
E(R,P)\leq \rho [\Espcc(\{\tilde{\mathcal{C}}_a\},R-I(P,V)-\varepsilon, P,V) \\ +R-I(P,V)].
\end{multline}
Since the choice of $\rho$, of the channels $\{\tilde{\mathcal{C}}_a\}\in\Gamma(\rho)$ and of $V$ can be optimized, we have, in analogy with \cite[Th. 11]{dalai-TIT-2013},
\begin{theorem}
The reliability function with constant composition $P$ satisfies $E(R,P)\leq E_{\text{spu}}^{\text{cc}}(R,P)$ where
\begin{multline}
E_{\text{spu}}^{\text{cc}}(R,P)=\inf \rho [\Espcc(\{\tilde{\mathcal{C}}_a\},R-I(P,V)-\varepsilon, P,V) \\ +R-I(P,V)],
\label{def:Espucc}
\end{multline}
the infimum being over $\varepsilon>0$, $\rho\geq 1$, auxiliary channels $\tilde{\mathcal{C}}_a\in\Gamma(\rho)$, and conditional distributions $V$ such that $PV=P$.
\label{Th:Espucc}
\end{theorem}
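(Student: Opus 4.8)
The plan is to assemble the chain of inequalities \eqref{eq:ineqEinprod}--\eqref{eq:ineq3}, built up in the discussion preceding the statement, into one limiting argument, and then to optimise over the parameters that were held fixed there.

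Fix $\rho\geq 1$, a conditional distribution $V$ with $PV=P$, auxiliary pure-state channels $\tilde{\mathcal{C}}_a\in\Gamma(\rho)$ for $a\in\mathcal{A}=\mathcal{X}$, and $\varepsilon>0$. One may assume $I(P,V)<R$: otherwise $R-I(P,V)-\varepsilon<0$, and since $E_0^{\text{cc}}(\{\tilde{\mathcal{C}}_a\},\cdot,P,V)\geq 0$ (every trace $\Tr(\tilde{S}_{a,x}^{\frac{1}{1+\rho}}F^{\frac{\rho}{1+\rho}})$ is at most $1$), one has $\Espcc(\{\tilde{\mathcal{C}}_a\},R',P,V)=+\infty$ for $R'<0$, so such $V$ do not constrain the infimum in \eqref{def:Espucc}. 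Now choose a sequence of codes of composition $P$ and rate at least $R$ whose error exponents converge to $E(R,P)$. For each code in the sequence, Blahut's combinatorial argument (\cite{blahut-1977}, proof of Th.~8; see also \cite{dalai-ISIT-2014a}) provides a sequence $\bar{\bm{x}}$ of composition $P$ and a subset $\mathcal{T}$ of $e^{n(R-I(P,V)+o(1))}$ codewords, all of conditional composition $V$ given $\bar{\bm{x}}$; set $\bm{a}=\bar{\bm{x}}$. Tensorising the defining inequality of $\Gamma(\rho)$ over the $n$ coordinates gives $|\braket{\bm{\psi}_{\bm{x}}}{\bm{\psi}_{\bm{x}'}}|^2\geq|\braket{\tilde{\bm{\psi}}_{\bm{a},\bm{x}}}{\tilde{\bm{\psi}}_{\bm{a},\bm{x}'}}|^{2\rho}$, hence \eqref{eq:ineq1}. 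Bounding the maximum of the transformed overlaps over pairs $m\neq m'$ in the full code below by the maximum over such pairs inside $\mathcal{T}$, and invoking the union/Holevo bound \eqref{eq:ineq2} for $\mathcal{T}$ used over the varying channel $\tilde{\mathcal{C}}_{a_1},\ldots,\tilde{\mathcal{C}}_{a_n}$, reduces the problem to a lower bound on $\tilde{\mathsf{P}}_{\text{e,max}}$, up to the cardinality factor $|\mathcal{T}|=e^{n(R-I(P,V)+o(1))}$.

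That lower bound is exactly the conditional sphere packing bound of the present section, applied to $\mathcal{T}$: this code has block-length $n$, rate $R-I(P,V)+o(1)$, and conditional composition $V$ relative to a sequence of composition $P$, so \eqref{eq:ineq3} holds for the auxiliary channels $\{\tilde{\mathcal{C}}_a\}$. Chaining \eqref{eq:ineq1}, \eqref{eq:ineq2} and \eqref{eq:ineq3} and letting $n\to\infty$ along the chosen sequence of codes, so that every $o(1)$ term disappears, gives
\[
E(R,P)\leq\rho\left[\Espcc(\{\tilde{\mathcal{C}}_a\},R-I(P,V)-\varepsilon,P,V)+R-I(P,V)\right],
\]
the additive $R-I(P,V)$ being the contribution of the $|\mathcal{T}|$-factor in \eqref{eq:ineq2}. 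Since $\varepsilon>0$, $\rho\geq 1$, the channels $\tilde{\mathcal{C}}_a\in\Gamma(\rho)$ and the matrix $V$ with $PV=P$ were arbitrary, taking the infimum over all of them yields $E(R,P)\leq E_{\text{spu}}^{\text{cc}}(R,P)$. (As a consistency check, $\rho=1$, $\tilde{\mathcal{C}}_a=\mathcal{C}$ and $V$ with all rows equal to $P$ give $I(P,V)=0$ and recover $\Espcc(\mathcal{C},R-\varepsilon,P)$, so the bound is never weaker than Theorem~\ref{th:cc-sphere-packing}.)

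The point that needs genuine care, rather than the formal substitution above, is the bookkeeping across the several nested asymptotic statements: $E(R,P)$ is itself a $\limsup$; Blahut's extraction introduces $o(1)$ errors in both the cardinality and the conditional composition of $\mathcal{T}$; and the conditional sphere packing bound is stated as a limiting inequality over sequences of codes whose conditional compositions merely \emph{tend} to $V$. One must verify that the subcodes $\mathcal{T}$ arising along the sequence do form an admissible sequence for that bound (conditional compositions converging to $V$, rates converging to $R-I(P,V)$) and, using continuity of $\Espcc(\{\tilde{\mathcal{C}}_a\},\cdot,P,V)$ in the rate on the region where it is finite, that the $o(1)$ rate discrepancy can be absorbed into the slack $\varepsilon$ already present in \eqref{eq:ineq3}. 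Everything else is the computation already carried out before the statement.
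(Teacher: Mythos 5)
Your argument reproduces the paper's own derivation, which is carried out in the paragraphs preceding the theorem statement: Blahut's extraction of the subcode $\mathcal{T}$ with conditional composition $V$, the overlap inequality from $\tilde{\mathcal{C}}_a\in\Gamma(\rho)$, the Holevo-type union bound \eqref{eq:ineq2}, the conditional sphere packing bound \eqref{eq:ineq3}, and then optimisation over the free parameters. The extra care you flag about stitching together the nested asymptotics (the $\limsup$ in $E(R,P)$, the $o(1)$ slack from Blahut's lemma, the convergence of conditional compositions, and the absorption of rate discrepancies into $\varepsilon$) is legitimate and is indeed glossed over in the paper.

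One slip in the final parenthetical: with $\rho=1$, $\tilde{\mathcal{C}}_a=\mathcal{C}$, and $V(\cdot|x)=P$, the bound evaluates to $\Espcc(\mathcal{C},R-\varepsilon,P)+R$, not $\Espcc(\mathcal{C},R-\varepsilon,P)$; the additive $R-I(P,V)=R$ term coming from the cardinality factor in \eqref{eq:ineq2} does not disappear. This is consistent with the paper's own Remark (which retains the $+R$), so $E_{\text{spu}}^{\text{cc}}(R,P)$ is \emph{not} uniformly tighter than the bound of Theorem~\ref{th:cc-sphere-packing}; it improves it near $R_\infty$ but is weaker at high rates. This does not affect the validity of the proof itself, only the claimed sanity check.
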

\begin{remark}
Note that for the choice $V(x'|x)=P(x')$, $\forall x$, we have $I(P,V)=0$. We can also notice that the optimization of the channels $\tilde{\mathcal{C}}_a$ will give $\tilde{\mathcal{C}}_a=\tilde{\mathcal{C}}$, $\forall a$, for an optimal $\tilde{\mathcal{C}}$. With this constraint on $V$, the bound $E_{\text{spu}}^{\text{cc}}(R,P)$ is weakened to
\begin{equation}
\inf \rho [\Espcc(\tilde{\mathcal{C}},R-\varepsilon, P) +R],
\end{equation}
where the minimum is now only over $\rho\geq 1$ and $\tilde{\mathcal{C}}\in \Gamma(\rho)$. This is a constant composition version of the bound in \cite[Th. 11]{dalai-TIT-2013}.
\end{remark}

In the same way as \cite[Th. 11]{dalai-TIT-2013} generalizes the results of \cite[Sec. III]{dalai-TIT-2013}, Theorem \ref{Th:Espucc} generalizes the results of \cite{dalai-ISIT-2014a}.	To see this, we can study the smallest rate for which the bound $E_{\text{spu}}^{\text{cc}}(R,P)$ is finite. First note that for fixed channels $\{\tilde{\mathcal{C}}_a\}$, distribution $V$, and $\varepsilon$ sufficiently small, the quantity on the right hand side of equation \eqref{def:Espucc} is finite for $R>R_\infty(\{\tilde{\mathcal{C}}_a\},P,V)+I(P,V)$. Furthermore, when $R$ approaches this value from the right,  using equation \eqref{eq:CondEspRinfty}, 
the right hand side of equation \eqref{def:Espucc} is upper bounded by $2\rho (R_\infty(\{\tilde{\mathcal{C}}_a\},P,V))$. Using now equation \eqref{eq:ineqEinprod},
we find that 
\begin{equation*}
 -\frac{1}{n}\log \max_{m\neq m'} |\braket{\bm{\psi}_{\bm{x}_m}}{\bm{\psi}_{\bm{x}_{m'}}}|^2 \geq  2\rho (R_\infty(\{\tilde{\mathcal{C}}_a\},P,V)) + o(1).
\end{equation*}
So, for $R>R_\infty(\{\tilde{\mathcal{C}}_a\},P,V)$ we have the bound
\begin{equation*}
 -\frac{1}{n}\log \max_{m\neq m'} |\braket{\bm{\psi}_{\bm{x}_m}}{\bm{\psi}_{\bm{x}_{m'}}}| \geq  \rho (R_\infty(\{\tilde{\mathcal{C}}_a\},P,V)) + o(1).
\end{equation*}
Optimizing now over $\rho$, $V$ and the auxiliary channels $\{\tilde{\mathcal{C}}_a\}$, and comparing the definition of $R_\infty(\{\tilde{\mathcal{C}}_a\},P,V)$ with the definition of $\vartheta(\rho,P,V)$ used in \cite{dalai-ISIT-2014a}, we find that the	 bound of Theorem \ref{Th:Espucc} includes, as a particular case, the bound presented in \cite{dalai-ISIT-2014a} as a generalization of the Elias bound on the Bhattacharyya distance of codes.

% conference papers do not normally have an appendix

% use section* for acknowledgement
%\section*{Acknowledgment}

%The authors would like to thank...

% trigger a \newpage just before the given reference
% number - used to balance the columns on the last page
% adjust value as needed - may need to be readjusted if
% the document is modified later
%\IEEEtriggeratref{8}
% The "triggered" command can be changed if desired:
%\IEEEtriggercmd{\enlargethispage{-5in}}

% references section

% can use a bibliography generated by BibTeX as a .bbl file
% BibTeX documentation can be easily obtained at:
% http://www.ctan.org/tex-archive/biblio/bibtex/contrib/doc/
% The IEEEtran BibTeX style support page is at:
% http://www.michaelshell.org/tex/ieeetran/bibtex/
%\bibliographystyle{IEEEtran}
% argument is your BibTeX string definitions and bibliography database(s)
%\bibliography{IEEEabrv,../bib/paper}
%
% <OR> manually copy in the resultant .bbl file
% set second argument of \begin to the number of references
% (used to reserve space for the reference number labels box)
%\begin{thebibliography}{1}

%\bibliography{bibeit}

% Generated by IEEEtran.bst, version: 1.13 (2008/09/30)

%\bibitem{IEEEhowto:kopka}
%H.~Kopka and P.~W. Daly, \emph{A Guide to \LaTeX}, 3rd~ed.\hskip 1em plus
 % 0.5em minus 0.4em\relax Harlow, England: Addison-Wesley, 1999.

%\end{thebibliography}

% that's all folks
\end{document}